\pgfplotsset{compat=1.15}
\DeclareMathOperator*{\argmin}{argmin}
\newcommand\constructosum[3]{%
    \begin{tikzpicture}[baseline=(char.base), inner sep=0, outer sep=0]
        \draw (#1,0) circle (#2); 
        \node (char) at (0,0) {$#3\sum$}; 
    \end{tikzpicture}%
}
\newcommand{\osum}{\mathop{\mathchoice
        {\constructosum{-0.3ex}{0.13}{\displaystyle}}
        {\constructosum{-0.3ex}{0.09}{\textstyle}}
        {\constructosum{-0.2ex}{0.06}{\scriptstyle}}
        {\constructosum{-0.15ex}{0.05}{\scriptscriptstyle}}
    }\displaylimits
}
\begin{document}
	\newcommand{\ensemble}{\mathscr{C}}
\newcommand{\code}{\mathcal{C}}
\newcommand{\vecu}{\boldsymbol{u}}
\newcommand{\veci}{\boldsymbol{i}}
\newcommand{\vecf}{\boldsymbol{f}}
\newcommand{\vecv}{\boldsymbol{v}}
\newcommand{\vecp}{\boldsymbol{p}}
\newcommand{\vecx}{\boldsymbol{x}}
\newcommand{\vecone}{\boldsymbol{1}}
\newcommand{\vecuhat}{\hat{\boldsymbol{u}}}
\newcommand{\vecc}{\boldsymbol{c}}
\newcommand{\vecb}{\boldsymbol{b}}
\newcommand{\vecchat}{\hat{\boldsymbol{c}}}
\newcommand{\vecy}{\boldsymbol{y}}
\newcommand{\vecz}{\boldsymbol{z}}
\newcommand{\B}{\boldsymbol{B}}
\newcommand{\G}{\boldsymbol{G}}
\newcommand{\GK}{\boldsymbol{K}}
\newcommand{\Gsys}{\G_{\mathsf{i},\mathsf{sys}}}
\newcommand{\Gnsys}{\G_{\mathsf{i},\mathsf{nsys}}}
\newcommand{\Per}{\boldsymbol{\Pi}}
\newcommand{\I}{\boldsymbol{I}}
\newcommand{\perP}{\boldsymbol{P}}
\newcommand{\perS}{\boldsymbol{S}}
\newcommand{\GH}[1]{\bm{\mathsf{G}}_{#1}}
\newcommand{\T}[1]{\bm{\mathsf{T}}{(#1)}}

\newcommand{\mi}{\mathrm{I}}
\newcommand{\Prob}{P}
\newcommand{\Z}{\mathrm{Z}}
\newcommand{\SPC}{\mathcal{S}}
\newcommand{\Rep}{\mathcal{R}}
\newcommand{\f}{\mathrm{f}}
\newcommand{\SC}{\mathrm{SC}}
\newcommand{\Q}{\mathrm{Q}}
\newcommand{\MAP}{\mathrm{MAP}}
\newcommand{\E}{\mathrm{E}}
\newcommand{\U}{\mathrm{U}}
\newcommand{\Ham}{\mathrm{H}}

\newcommand{\Amin}{\mathrm{A}_{\mathrm{min}}}
\newcommand{\dmin}{d}
\newcommand{\erfc}{\mathrm{erfc}}

\newcommand{\de}{\mathrm{d}}

\newcommand{\decoRule}{\rule{\textwidth}{.4pt}}

\newcommand{\oleq}[1]{\overset{\text{(#1)}}{\leq}}
\newcommand{\oeq}[1]{\overset{\text{(#1)}}{=}}
\newcommand{\ogeq}[1]{\overset{\text{(#1)}}{\geq}}
\newcommand{\ogeql}[2]{\overset{#1}{\underset{#2}{\gtreqless}}}
\newcommand\numeq[1]%
{\stackrel{\scriptscriptstyle(\mkern-1.5mu#1\mkern-1.5mu)}{=}}

\newcommand{\pscd}{\gl{\Prob(\mathcal{E}_{\SCD})}}
\newcommand{\uED}{\gl{\hat{u}}}
\newcommand{\LED}{\gl{L_i^{(\ED)}}}
\newcommand{\inverson}[1]{\gl{\mathbb{I}\left\{#1\right\}}}
\theoremstyle{definition}
\newtheorem{mydef}{Definition}
\newtheorem{prop}{Proposition}
\newtheorem{theorem}{Theorem}

\newtheorem{lemma}{Lemma}
\newtheorem{remark}{Remark}
\newtheorem{example}{Example}
\newtheorem{definition}{Definition}
\newtheorem{corollary}{Corollary}


\definecolor{lightblue}{rgb}{0,.5,1}
\definecolor{normemph}{rgb}{0,.2,0.6}
\definecolor{supremph}{rgb}{0.6,.2,0.1}
\definecolor{lightpurple}{rgb}{.6,.4,1}
\definecolor{gold}{rgb}{.6,.5,0}
\definecolor{orange}{rgb}{1,0.4,0}
\definecolor{hotpink}{rgb}{1,0,0.5}
\definecolor{newcolor2}{rgb}{.5,.3,.5}
\definecolor{newcolor}{rgb}{0,.3,1}
\definecolor{newcolor3}{rgb}{1,0,.35}
\definecolor{darkgreen1}{rgb}{0, .35, 0}
\definecolor{darkgreen}{rgb}{0, .6, 0}
\definecolor{darkred}{rgb}{.75,0,0}
\definecolor{midgray}{rgb}{.8,0.8,0.8}
\definecolor{darkblue}{rgb}{0,.25,0.6}

\definecolor{lightred}{rgb}{1,0.9,0.9}
\definecolor{lightblue}{rgb}{0.9,0,0.0}
\definecolor{lightpurple}{rgb}{.6,.4,1}
\definecolor{gold}{rgb}{.6,.5,0}
\definecolor{orange}{rgb}{1,0.4,0}
\definecolor{hotpink}{rgb}{1,0,0.5}
\definecolor{darkgreen}{rgb}{0, .6, 0}
\definecolor{darkred}{rgb}{.75,0,0}
\definecolor{darkblue}{rgb}{0,0,0.6}

\definecolor{bgblue}{RGB}{245,243,253}
\definecolor{ttblue}{RGB}{91,194,224}

\definecolor{dark_red}{RGB}{150,0,0}
\definecolor{dark_green}{RGB}{0,150,0}
\definecolor{dark_blue}{RGB}{0,0,150}
\definecolor{dark_pink}{RGB}{80,120,90}

\begin{acronym}
\acro{5G}{the $5$-th generation wireless system}
	\acro{APP}{a-posteriori probability}
	\acro{ARQ}{automated repeat request}
	\acro{ASCL}{adaptive successive cancellation list}
	\acro{ASK}{amplitude-shift keying}
	\acro{AUB}{approximated union bound}
	\acro{AWGN}{additive white Gaussian noise}
	\acro{B-DMC}{binary-input discrete memoryless channel}
	\acro{BEC}{binary erasure channel}
	\acro{BER}{bit error rate}
	\acro{biAWGN}{binary-input additive white Gaussian noise}
	\acro{BLER}{block error rate}
	\acro{bpcu}{bits per channel use}
	\acro{BPSK}{binary phase-shift keying}
	\acro{BRGC}{binary reflected Gray code}
	\acro{BSS}{binary symmetric source}
	\acro{CC}{chase combining}
	\acro{CN}{check node}
	\acro{CRC}{cyclic redundancy check}
	\acro{CSI}{channel state information}
	\acro{DE}{density evolution}
	\acro{DMC}{discrete memoryless channel}
	\acro{DMS}{discrete memoryless source}
	\acro{DSCF}{dynamic successive cancellation flip}
	\acro{eMBB}{enhanced mobile broadband}
	\acro{FER}{frame error rate}
	\acro{FHT}{fast Hadamard transform}
	\acro{GA}{Gaussian approximation}
	\acro{GF}{Galois field}
	\acro{HARQ}{hybrid automated repeat request}
	\acro{i.i.d.}{independent and identically distributed}
	\acro{IF}{incremental freezing}
	\acro{IR}{incremental redundancy}
	\acro{LDPC}{low-density parity-check}
	\acro{LFPE}{length-flexible polar extension}
	\acro{LLR}{log-likelihood ratio}
	\acro{MAP}{maximum-a-posteriori}
	\acro{MC}{Monte Carlo}
	\acro{MLC}{multilevel coding}
	\acro{MLPC}{multilevel polar coding}
	\acro{MLPC}{multilevel polar coding}
	\acro{ML}{maximum-likelihood}
	\acro{MC}{metaconverse}
	\acro{PAC}{polarization-adjusted convolutional}
	\acro{PAT}{pilot-assisted transmission}
	\acro{PCM}{polar-coded modulation}
	\acro{PDF}{probability density function}
	\acro{PE}{polar extension}
	\acro{PMF}{probability mass function}
	\acro{PM}{path metric}
	\acro{PW}{polarization weight}
	\acro{QAM}{quadrature amplitude modulation}
	\acro{QPSK}{quadrature phase-shift keying}
	\acro{QUP}{quasi-uniform puncturing}
	\acro{RCU}{random-coding union}
	\acro{RM}{Reed-Muller}
	\acro{RQUP}{reversal quasi-uniform puncturing}
	\acro{RV}{random variable}
	\acro{SC-Fano}{successive cancellation Fano}
	\acro{SCOS}{successive cancellation ordered search}
	\acro{SCF}{successive cancellation flip}
	\acro{SCL}{Successive cancellation list}
	\acro{SCS}{successive cancellation stack}
	\acro{SC}{successive cancellation}
	\acro{SE}{spectral efficiency}
	\acro{SNR}{signal-to-noise ratio}
	\acro{SP}{set partitioning}
	\acro{UB}{union bound}
	\acro{VN}{variable node}
\end{acronym}
	\title{Complexity-Adaptive Maximum-Likelihood Decoding of Modified $\G_N$-Coset Codes}

	\author{\IEEEauthorblockN{Peihong Yuan and Mustafa Cemil Co\c{s}kun}
	    \IEEEauthorblockA{Institute for Communications Engineering (LNT)\\ Technical University of Munich (TUM) \\
	Email: \{peihong.yuan,mustafa.coskun\}@tum.de
	}
}

	\maketitle

\begin{abstract}
A complexity-adaptive tree search algorithm is proposed for $\G_N$-coset codes that implements maximum-likelihood (ML) decoding by using a successive decoding schedule. The average complexity is close to that of the successive cancellation (SC) decoding for practical error rates when applied to polar codes and short Reed-Muller (RM) codes, e.g., block lengths up to $N=128$. By modifying the algorithm to limit the worst-case complexity, one obtains a near-ML decoder for longer RM codes and their subcodes. Unlike other bit-flip decoders, no outer code is needed to terminate decoding. The algorithm can thus be applied to modified $\G_N$-coset code constructions with dynamic frozen bits. One advantage over sequential decoders is that there is no need to optimize a separate parameter.  
\end{abstract}

\section{Introduction}\label{sec:intro}

$\G_N$-coset codes are a class of block codes \cite{arikan2009channel} that include polar codes\cite{stolte2002rekursive,arikan2009channel} and \ac{RM} codes\cite{reed,muller}. Polar codes achieve capacity over \acp{B-DMC} under low-complexity \ac{SC} decoding\cite{arikan2009channel} and \ac{RM} codes achieve capacity for \acp{BEC} under \ac{ML} decoding\cite{kudekar17}.
However, their performance under \ac{SC} decoding~\cite{stolte2002rekursive} is not competitive in the short- to moderate-length regime, e.g., from $128$ to $1024$ bits\cite{Coskun18:Survey}.\footnote{\ac{RM} codes become less suited for \ac{SC} decoding with an increasing length\cite{IU21,CP21}. Asymptotically, their error probability under \ac{SC} decoding is lower-bounded by $\nicefrac{1}{2}$\cite[Section X]{arikan2009channel}.} Significant research effort hast been put into approaching \ac{ML} performance by modifying the SC decoding schedule and/or improving the distance properties~\cite{Dumer04,Dumer06,Dumer:List06,NC12,miloslavskaya2014sequential,trifonov2018score,jeong2019sc,RMpolar14,Mondelli14,afisiadis2014low,chandesris2018dynamic,tal2015list,trifonov16,PCC_Polar16,FabregasSiegel17,HDM18,Ye19,Ivanov19,Yuan19,arikan2019sequential,CNP20,CP20,LGZ21,LZG19,YFV20,RBV20,LZL21,Rowshan19,KKK19,MV20,GEE20,GEE21,CP21,TG21}.

The idea of \emph{dynamic} frozen bits lets one represent any linear block code as a \emph{modified} $\G_N$-coset code\cite{trifonov16}. This concept unifies the concatenated polar code approach, e.g., with a high-rate outer \ac{CRC} code, to improve the distance spectrum of polar codes so that they can be decoded with low to moderate complexity\cite{tal2015list}.

This paper proposes \emph{\ac{SCOS}} decoding as a complexity-adaptive \ac{ML} decoder for modified $\G_N$-coset codes. An extension of the algorithm limits the worse-case complexity while still permitting near-\ac{ML} decoding. The decoder can be used for standalone $\G_N$-coset codes or for \ac{CRC}-concatenated $\G_N$-coset codes. In particular, numerical results show that \ac{RM} and RM-polar codes with dynamic frozen bits of block length $N\in\{128,256\}$, e.g., \ac{PAC} codes \cite{arikan2019sequential} and dRM-polar codes, perform within $0.25$ dB of the \ac{RCU} bound\cite{Polyanskiy10:BOUNDS} with an average complexity very close to that of \ac{SC} decoding at a \ac{FER} of $10^{-5}$ or below. For higher \acp{FER}, the gap to the \ac{RCU} bound is even smaller with higher complexity.

\subsection{Preview of the Proposed Algorithm}
\ac{SCOS} decoding borrows ideas from \ac{SC}-based flip\cite{afisiadis2014low,chandesris2018dynamic}, sequential\cite{fano1963heuristic,NC12,miloslavskaya2014sequential,trifonov2018score,jeong2019sc} and list decoders\cite{D74,fossorier,wu2007soft,tal2015list}. It is a tree search algorithm that flips the bits of valid paths to find a leaf with higher likelihood than other leaves, if such a leaf exists, and repeats until the \ac{ML} decision is found. The search stores a list of branches that is updated progressively while running partial \ac{SC} decoding by flipping the bits of the most likely leaf at each iteration. The order of the candidates follows the probability that they provide the \ac{ML} decision. \ac{SCOS} does not require an outer code (as for flip decoders) or parameter optimization for the performance vs. complexity trade-off (as for sequential decoders).

This paper is organized as follows. Section \ref{sec:prelim} gives background on the problem. Section \ref{sec:sc_os} presents the \ac{SCOS} algorithm. A lower bound on its complexity for \ac{ML} performance is described in Section \ref{sec:comp}.  Section \ref{sec:numerical} presents numerical results and Section \ref{sec:conclusions} concludes the paper.
\section{Preliminaries}
\label{sec:prelim}
We begin by introducing notation. Let $x^a$ be the vector $(x_1, x_2, \dots, x_a)$; if $a=0$, then the vector is empty. Given $x^N$ and a set $\mathcal{A}\subset [N]\triangleq\{1,\dots,N\}$, let $x_{\mathcal{A}}$ be the subvector $(x_i:i\in\mathcal{A})$. For sets $\mathcal{A}$ and $\mathcal{B}$, the symmetric difference is denoted $\mathcal{A}\triangle\mathcal{B}$ and an intersection set as $\mathcal{A}^{(N)}\triangleq\mathcal{A}\cap[N]$. Uppercase letters refer to \acp{RV} and lowercase letters to their realizations. A \ac{B-DMC} is written as $ W : \mathcal{X} \rightarrow \mathcal{Y} $, with input alphabet $ \mathcal{X} = \{0,1\} $, output alphabet $ \mathcal{Y} $, and transition probabilities $ W(y|x) $ for $ x \in \mathcal{X} $ and $ y \in \mathcal{Y} $. The transition probabilities of $N$ independent uses of the same channel are denoted as $ W^N(y^N|x^N) = \prod_{i=1}^{N} W(y_i|x_i) $. Capital bold letters refer to matrices, e.g., $\B_N$ denotes the $N\times N$ \emph{bit reversal} matrix \cite{arikan2009channel}.
\subsection{$\G_N$-coset Codes}\label{sec:polar_RM}

Consider the matrix $\G_{N} = \B_N\G_{2}^{\otimes n}$, where $N = 2^n$ with a non-negative integer $n$ and $\G_{2}^{\otimes n}$ is the $n$-fold Kronecker product of $\G_{2}$ defined as
\begin{equation}
\G_2 \triangleq 
\begin{bmatrix}
1 & 0 \\
1 & 1 
\end{bmatrix}.
\label{eq:transform_matrix}
\end{equation}
For the set $\mathcal{A}\subseteq[N]$ with $|\mathcal{A}|=K$, let $U_{\mathcal{A}}$ have entries that are \ac{i.i.d.} uniform \emph{information} bits, and let $U_{\mathcal{A}^c}=u_{\mathcal{A}^c}$ be fixed or \emph{frozen}. The mapping $c^N = u^N\G_{N}$ defines a $\G_N$-coset code\cite{arikan2009channel}. Polar and \ac{RM} codes are $\G_N$-coset codes with to different selections of $\mathcal{A}$\cite{arikan2009channel,stolte2002rekursive}.

Using $\G_{N}$, the transition probability from $u^N$ to $y^N$ is $W_N(y^N|u^N) \triangleq W^N(y^N|u^N\G_{N})$. The transition probabilities of the $i$-th \emph{bit-channel}, an artificial channel with the input $u_i$ and the output $(y^N,u^{i-1})$, are defined by
\begin{equation}
	W_N^{(i)}(y^N,u^{i-1}|u_i) \triangleq \sum_{u_{i+1}^N\in\mathcal{X}^{N-i}}\frac{1}{2^{N-1}}W_N(y^N|u^N).
\end{equation}
A $(N,K)$ polar code is designed by placing the $K$ most reliable bit-channels with indices $i\in[N]$ under the assumption that $U_i$, $i\in[N]$, are \ac{i.i.d.} uniform \acp{RV}, into the set $\mathcal{A}$. For a channel parameter, $\mathcal{A}$ can be found using density evolution \cite{arikan2009channel,Mori:2009SIT}. An $r$-th order \ac{RM} code of length-$N$ and dimension $K = \sum_{i=0}^{r}\binom{n}{i}$, where $0\leq r\leq n$, is denoted as RM$(r,n)$. Its set $\mathcal{A}$ consists of the indices, $i\in[N]$, with the Hamming weight at least equal to $n-r$ for the binary expansion of $i-1$. For both codes, one sets $u_i=0$ for $i\in\mathcal{A}^c$.

We make use of \emph{dynamic} frozen bits\cite{trifonov16}. A frozen bit is dynamic if its value depends on a subset of information bits preceding it; the resulting codes are called modified $\G_N$-coset codes. Dynamic frozen bits give better performance as the decoding algorithm approaches \ac{ML} decoding\cite{Yuan19,arikan2019sequential,CNP20,CP20,CP21,LGZ21} since the weight spectrum of the resulting code tends to improve as compared to the underlying code\cite{Yuan19,LZG19,YFV20,RBV20,LZL21,LGZ21}.

\subsection{Related Decoding Algorithms}\label{sec:decoding}

\subsubsection{Successive Cancellation Decoding}

Let $c^{N}$ and $y^N$ be the transmitted and received words, respectively. The \ac{SC} decoder mimics an \ac{ML} decision for the $i$-th bit-channel sequentially from $i=1$ to $i=N$  as follows. For $i\in\mathcal{F}$ set $\hat{u}_{i}$ to its (dynamic) frozen value. For $i\in\mathcal{A}$ compute the soft message $\ell\left(\hat{u}_1^{i-1}\right)$ defined as
\begin{equation}\label{eq:soft_message}
    \ell\left(\hat{u}_1^{i-1}\right)\triangleq\log\frac{P_{U_i|Y^NU^{i-1}}(0|y^N,\hat{u}^{i-1})}{P_{U_i|Y^NU^{i-1}}(1|y^N,\hat{u}^{i-1})}
\end{equation}
by assuming that the previous decisions $\hat{u}_{1}^{i-1}$ are correct and the frozen bits after $u_i$ are uniformly distributed. Now make the hard decision
\begin{equation}\label{eq:dec_fnc}
	\hat{u}_{i}=\begin{cases}
	0 & \text{if }\ell\left(\hat{u}_1^{i-1}\right)\geq 0\\
	1 & \text{otherwise}.
	\end{cases}
\end{equation}

\subsubsection{Successive Cancellation List Decoding}\label{sec:list_decoding}
\ac{SCL} decoding tracks several \ac{SC} decoding paths\cite{tal2015list} in parallel. At each decoding phase $i\in\mathcal{A}$, instead of making a hard decision on $u_i$, two possible decoding paths are continued in parallel threads, leading to up to $2^k$ decoding paths. The maximum number of paths implements \ac{ML} decoding but with an exponential number of decoding paths. To limit the complexity, one may keep up to $L$ paths at each phase. The reliability of a decoding path $\tilde{u}^{i}$ is quantified by a \emph{\ac{PM}} defined as~\cite{balatsoukas2015llr}
\begin{align}
    M\left(\tilde{u}^{i}\right)&\triangleq -\log P_{U^i|Y^N}\left(\tilde{u}^i|y^N\right)\label{eq:score_scl}\\
    &=M\left(\tilde{u}^{i-1}\right)+\log\left(1+e^{-\left(1-2\tilde{u}_i\right)\ell\left(\hat{u}_1^{i-1}\right)}\right)\label{eq:score_scl2}
\end{align}
where \eqref{eq:score_scl2} can be computed recursively using the \ac{SC} decoding with $M\left(\hat{u}^0\right)\triangleq 0$. 
At the end of $N$-th decoding phase, a list $\mathcal{L}$ of paths is collected. Finally, the output is the bit vector minimizing the \ac{PM}:
\begin{equation}
    \hat{u}^N = \argmin_{\tilde{u}^{N}\in\mathcal{L}} M\left(\tilde{u}^{N}\right).
\end{equation}

\subsubsection{Flip Decoding}\label{sec:flip_decoding}
An early erroneous bit decision may cause error propagation due to the serial nature of \ac{SC} decoding. The main idea of \ac{SCF} decoding \cite{afisiadis2014low} is to try to correct the first erroneous bit decision by sequentially flipping the unreliable decisions. This procedure requires an error-detecting outer code, e.g., a \ac{CRC} code.

The \ac{SCF} decoder starts by performing \ac{SC} decoding for the inner code to generate the first estimate $\tilde{u}^N$. If $\tilde{u}^N$ passes the \ac{CRC} test, it is declared as the output $\hat{u}^N=\tilde{u}^N$. If not, then the \ac{SCF} algorithm attempts to correct the bit errors at most $T$ times. At the $t$-th attempt, $t\in[T]$, the decoder finds the index $i_t$ of the $t$-th least reliable decision in $\tilde{u}^N$ according to the amplitudes of the soft messages \eqref{eq:soft_message}. The \ac{SCF} algorithm restarts the \ac{SC} decoder by flipping the estimate $\tilde{u}_{i_t}$ to $\tilde{u}_{i_t}\oplus 1$. The CRC is checked after each attempt. This decoding process continues until the \ac{CRC} passes or $T$ is reached.

Introducing a bias term to account for the reliability of the previous decisions enhances the performance~\cite{chandesris2018dynamic}. The improved metric is calculated as
\begin{align}
    Q(i) = \left|\ell\left(\tilde{u}^{i-1}\right)\right| + \sum_{\substack{j\in\mathcal{A}^{(i)}}}\frac{1}{\alpha}\log\left(1+e^{-\alpha \left|\ell\left(\tilde{u}^{j-1}\right)\right|}\right)
\end{align}
where $\alpha>0$ is a scaling factor.

\ac{SCF} decoding can be generalized to flip multiple bit estimates at once, leading to \ac{DSCF} decoding~\cite{chandesris2018dynamic}. The reliability of the initial estimates $\tilde{u}_\mathcal{E}$, $\mathcal{E}\subseteq\mathcal{A}$, is described by
\begin{align}\label{eq:dscf}
    \!\!\!\!Q(\mathcal{E}) \!=\! \sum_{i\in\mathcal{E}}\left|\ell\left(\tilde{u}^{i-1}\right)\right| \!+\! \sum_{\substack{j\in\mathcal{A}^{(i_\text{max})}}}\!\frac{1}{\alpha}\log\!\left(1+e^{-\alpha \left|\ell\left(\tilde{u}^{j-1}\right)\right|}\right)
\end{align}
where $i_{\text{max}}$ is the largest element in $\mathcal{E}$. The set of flipping positions is chosen as the one minimizing the metric \eqref{eq:dscf} and is constructed progressively.

\subsubsection{Sequential Decoding}\label{sec:seq_RM}
We review two sequential decoding algorithms, namely \ac{SCS} decoding~\cite{NC12,miloslavskaya2014sequential,trifonov2018score} and \ac{SC-Fano} decoding~\cite{jeong2019sc,arikan2019sequential}.

\ac{SCS} decoding stores the $L$ most reliable paths (possibly) with different length and discards the rest whenever the stack is full. At each iteration, the decoder selects the most reliable path and create two possible decoding paths based on this path. The winning word is declared once a path length becomes $N$.
\ac{SC-Fano} decoding deploys a Fano search \cite{fano1963heuristic} that allows backward movement in the decoding tree and that uses a dynamic threshold.

Sequential decoding compares paths of different lengths. However, the probabilities $P_{U^i|Y^N}\left(\tilde{u}^i|y^N\right)$, $\tilde{u}^i\in\{0,1\}^i$, cannot capture the effect of the path's length. This effect was taken into account first by~\cite{trifonov2018score}. Similar approach was used by \cite{jeong2019sc} to account for the expected error rate of the future bits as
\begin{align}
    S\left(\tilde{u}^i\right)&\triangleq -\log\frac{P_{U^i|Y^N}\left(\tilde{u}^i|y^N\right)}{\prod_{j=1}^{i} \left(1-p_{j}\right)}\label{eq:Score}\\
    &=M\left(\tilde{u}^{i}\right) + \sum_{j=1}^{i}\log\left(1-p_{j}\right)
\end{align}
where $p_{j}$ is the probability of the event that the first bit error occurred for $u_j$ in \ac{SC} decoding and $S\left(\tilde{u}^{0}\right)\triangleq 0$. The probabilities $p_{i}$ can be computed via Monte Carlo simulation\cite{stolte2002rekursive,arikan2009channel} or they can be approximated via density evolution\cite{Mori:2009SIT} offline.

\section{SC Ordered Search Decoding}
\label{sec:sc_os}

The \ac{SCOS} decoder starts by \ac{SC} decoding to provide an output $\tilde{u}^N$ as the current most likely leaf. The initial \ac{SC} decoding computes and stores the \ac{PM} \eqref{eq:score_scl} and the score \eqref{eq:Score} associated to the flipped versions of the decisions $\tilde{u}_i$, $\forall i\in\mathcal{A}$, i.e., $M\left(\left(\tilde{u}^{i-1},\tilde{u}_i\oplus 1\right)\right)$ and $S\left(\left(\tilde{u}^{i-1},\tilde{u}_i\oplus 1\right)\right)$, respectively. Every index $i\in\mathcal{A}$ with $M\left(\left(\tilde{u}^{i-1},\tilde{u}_i\oplus 1\right)\right) < M(\tilde{u}^N)$ is a flipping set.\footnote{Each set is a singleton at this stage.} The collection of all flipping sets forms a list $\mathcal{L}$. Each list member is visited in ascending order according to the score associated with it.

Suppose that $\mathcal{E}^*=\argmin_{\mathcal{E}\in\mathcal{L}} S\left(\mathcal{E}\right)$, where $S\left(\mathcal{E}\right)$ is the score associated with the flipping set $\mathcal{E}$. The decoder returns to decoding phase $j\triangleq\min_{i\in(\mathcal{E}^*\triangle\mathcal{E}^*_{(\text{p})})} i$, where $\mathcal{E}^*_{(\text{p})}$ is the flipping set chosen at the previous iteration, which is initialized as the empty set. The decision $\tilde{u}_j$ is flipped and \ac{SC} decoding continues. The set $\mathcal{E}^*$ is popped from the list $\mathcal{L}$. The \acp{PM} \eqref{eq:score_scl} and scores \eqref{eq:Score} are calculated again for the flipped versions for decoding phases with $i>j$, $i\in\mathcal{A}$, and the list $\mathcal{L}$ is enhanced by new flipping sets progressively (similar to \cite{chandesris2018dynamic}). The branch is discarded if at any decoding phase its \ac{PM} exceeds that of the current leaf, i.e., $M(\tilde{u}^N)$.\footnote{This pruning method is similar to the adaptive skipping rule proposed in~\cite{wu2007soft} for ordered-statistics decoding~\cite{D74,fossorier}.} Such a branch cannot output the \ac{ML} decision, since for any valid path $\tilde{u}^i$ the \ac{PM} \eqref{eq:score_scl2} is non-decreasing for the next stage, i.e., we have
\begin{align}\label{eq:pm_increase}
	M\left(\tilde{u}^{i}\right) \leq M\left(\tilde{u}^{i+1}\right), \forall \tilde{u}_{i+1}\in\{0,1\}.
\end{align}
If a leaf with lower \ac{PM} is found then it replaces the current most likely leaf. The procedure is repeated until it is impossible to find a more reliable path by flipping decisions, i.e., until $\mathcal{L}=\varnothing$. Hence, the \ac{SCOS} decoding implements an \ac{ML} decoder.

\begin{example}\label{exp:exp_scos}
Consider the $\left(4,2\right)$ polar code with $\mathcal{A}=\left\{2,4\right\}$ and $p_{1}^4=\left(0.4512, 0.1813, 0.1813, 0.0952\right)$.
Suppose the channel \ac{LLR} vector is $\left(-1.2, +3.4, -2.2, +0.9\right)$.
The \ac{SCOS} decoder works as follows (visualized in Figure~\ref{fig:exp_scos}):
\begin{itemize}
	\item[1.] \ac{SC} decoding (black path) gives an initial valid path $\tilde{u}^4=\left(0,0,0,0\right)$ with $M\left(\tilde{u}^4\right)= 3.4$. During the \ac{SC} decoding, the metrics $M\left(\left(\tilde{u}^{i-1},\tilde{u}_i\oplus 1\right)\right)$ and $S\left(\left(\tilde{u}^{i-1},\tilde{u}_i\oplus 1\right)\right)$, $i\in\mathcal{A}$, are computed as
	\begin{align*}
        M\left(\left(0,\textcolor{blue}{1}\right)\right) = 2.1&\quad\text{and}\quad S\left(\left(0,\textcolor{blue}{1}\right)\right) = 1.3\\
    	M\left(\left(0,0,0,\textcolor{brown}{1}\right)\right) = 4.3&\quad\text{and}\quad S\left(\left(0,0,0,\textcolor{brown}{1}\right)\right)= 3.2.
	\end{align*}
	As $M\left(\left(0,\textcolor{blue}{1}\right)\right) < M\left(\tilde{u}^4\right)$, we have a list $\mathcal{L}=\{\{2\}\}$.
	\item[2.] The decoder turns back to decoding stage $j=2$ since $\mathcal{E}^*=\{2\}$, flips the decision for $u_2$ to $\textcolor{blue}{1}$ (blue path) and continues \ac{SC} decoding (red path). The set $\mathcal{E}^*$ is popped from the list $\mathcal{L}$. Following the red path, the output is $\left(0,\textcolor{blue}{1},\textcolor{red}{0,1}\right)$ with $M\left(\left(0,\textcolor{blue}{1},\textcolor{red}{0,1}\right)\right) = 2.1$. Since $M\left(\left(0,\textcolor{blue}{1},\textcolor{red}{0,1}\right)\right) < M\left(\tilde{u}^4\right)$, the initial decision is updated as $\tilde{u}^4 = \left(0,\textcolor{blue}{1},\textcolor{red}{0,1}\right)$. Similarly, $M\left(\tilde{u}^i\right)$ and $S\left(\tilde{u}^i\right)$ are computed, for $i > 2$, $i\in\mathcal{A}$, during the decoding as
	\begin{equation*}
	    M\left(\left(0,\textcolor{blue}{1},\textcolor{red}{0},\textcolor{cyan}{0}\right)\right) = 5.6\quad\text{and}\quad S\left(\left(0,\textcolor{blue}{1},\textcolor{red}{0},\textcolor{cyan}{0}\right)\right)  = 4.5
	\end{equation*}
	\item[3.]  As $M\left(\tilde{u}^4\right)<M\left(\left(0,\textcolor{blue}{1},\textcolor{red}{0},\textcolor{cyan}{0}\right)\right)$, the list is empty, i.e., $\mathcal{L}=\varnothing$. Hence, the decoding is terminated and the \ac{ML} decision is $\hat{u}^4 = \tilde{u}^4$.	
	\end{itemize}
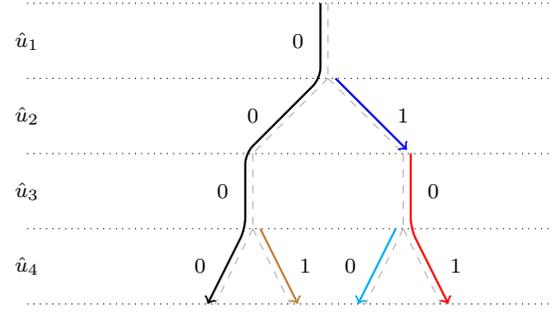
\begin{figure}
	\centering
	\footnotesize
	\begin{tikzpicture}
	\draw[black!30, dashed] (0,0) to (0,-1); \node at (-0.4,-0.5) {$0$};
	\draw[black!30, dashed] (0,-1) to (-1,-2); \node at (-1,-1.5) {$0$};
	\draw[black!30, dashed] (0,-1) to (1,-2); \node at (1,-1.5) {$1$};
	
	\draw[black!30, dashed] (-1,-2) to (-1,-3); \node at (-1.4,-2.5) {$0$};
	\draw[black!30, dashed] (1,-2) to (1,-3); \node at (1.4,-2.5) {$0$};
	\draw[black!30, dashed] (-1,-3) to (-1.5,-4); \node at (-1.7,-3.5) {$0$};
	\draw[black!30, dashed] (-1,-3) to (-0.5,-4); \node at (-0.3,-3.5) {$1$};
	\draw[black!30, dashed] (1,-3) to (1.5,-4); \node at (1.7,-3.5) {$1$};
	\draw[black!30, dashed] (1,-3) to (0.5,-4); \node at (0.3,-3.5) {$0$};

	\draw[dotted] (-4,0) to (3,0);
	\draw[dotted] (-4,-1) to (3,-1);
	\draw[dotted] (-4,-2) to (3,-2);
	\draw[dotted] (-4,-3) to (3,-3);
	\draw[dotted] (-4,-4) to (3,-4);
	
	\node at (-4,-0.5) {$\hat{u}_1$};
	\node at (-4,-1.5) {$\hat{u}_2$};
	\node at (-4,-2.5) {$\hat{u}_3$};
	\node at (-4,-3.5) {$\hat{u}_4$};
	
	\draw[->, rounded corners, thick] (-0.1,0) -- (-0.1,-1) -- (-1.1,-2) -- (-1.1,-3) -- (-1.6,-4);
	
	\draw[->, rounded corners, brown, thick] (-0.9,-3) -- (-0.4,-4);
	\draw[->, rounded corners, blue, thick] (0.1,-1) -- (1.05,-1.95);
	\draw[->, rounded corners, red, thick] (1.1,-2) -- (1.1,-3) -- (1.6,-4);
	\draw[->, rounded corners, cyan, thick] (0.9,-3) -- (0.4,-4);
	
\end{tikzpicture}
	\caption{The \ac{SCOS} decoding tree for an $\left(4,2\right)$ polar code.}
	\label{fig:exp_scos}
\end{figure}
\end{example}
\section{On the Complexity for ML Performance}
\label{sec:comp}

The decoding complexity is measured by the number of node-visits. For instance, the number of node-visits for \ac{SC} decoding $\chi_\text{SC}$ is simply the code length $N$. On the other hand, the complexity of \ac{SCOS} decoding is a \ac{RV} denoted as $\mathrm{X}$. 
\begin{remark}
    \ac{SCOS} decoding may visit the same node more than once and these visits are included in the comparison. To understand the minimum required complexity for \ac{SCOS} decoding, we define the set of partial input sequences $\tilde{u}^i$ with $i\in[N]$ with a smaller \ac{PM} than the \ac{ML} decision $\hat{u}^N_{\text{ML}}$.\footnote{There are $i$ node-visits for \ac{SC} decoding for any decoding path $\tilde{u}^i$.}
\end{remark}
\begin{definition}\label{def:setV}
    Let $\hat{u}^N_{\text{ML}}$ be the \ac{ML} decision given $y^N$. Define the set
    \begin{equation}
        \mathcal{V}\left(\hat{u}^N_{\text{ML}},y^N\right)\triangleq\bigcup_{i=1}^N\left\{u^i\in\{0,1\}^i: M\left(u^i\right) \leq M\left(\hat{u}^N_{\text{ML}}\right)\right\}.\label{eq:set_def}
    \end{equation}
\end{definition}
\begin{lemma}\label{lem:comp}
    For a particular realization $y^N$, we have  
    \begin{equation}
        \chi\geq \left|\mathcal{V}\left(\hat{u}^N_{\text{ML}},y^N\right)\right|\label{eq:complexity_realization}
    \end{equation}
    and the expected complexity is lower bounded as
    \begin{equation}
        \frac{1}{\chi_\text{SC}}\mathbb{E}\left[\mathrm{X}\right]\geq \frac{1}{N}\mathbb{E}\left[\left|\mathcal{V}\left(\hat{u}^N_{\text{ML}}(Y^N),Y^N\right)\right|\right].\label{eq:complexity_average}
    \end{equation}
\end{lemma}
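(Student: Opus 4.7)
The plan is to prove \eqref{eq:complexity_realization} by arguing that SCOS, in the course of certifying its output as ML, must visit every node of $\mathcal{V}(\hat u^N_{\text{ML}},y^N)$ at least once; the expected-complexity bound \eqref{eq:complexity_average} will then follow immediately by taking expectations on both sides and dividing by $\chi_\text{SC}=N$.

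I would first observe that $\mathcal{V}$ is closed under taking prefixes: if $u^i\in\mathcal{V}$ and $j<i$, then by the PM monotonicity \eqref{eq:pm_increase} we have $M(u^j)\leq M(u^i)\leq M(\hat u^N_{\text{ML}})$, so $u^j\in\mathcal{V}$. Hence $\mathcal{V}$ forms a subtree of the decoding tree rooted at the empty prefix. The main step is then an induction on depth $i$ showing that SCOS visits every $u^i\in\mathcal{V}$ at least once. For the base case, the initial SC pass computes the LLR at the root, yielding the PMs of both depth-$1$ children; one is visited along the SC trajectory itself, and any other depth-$1$ candidate $u^1\in\mathcal{V}$ satisfies $M(u^1)\leq M(\hat u^N_{\text{ML}})\leq M(\tilde u^N)$, so it is enrolled in $\mathcal{L}$ and visited by a subsequent iteration. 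For the inductive step, assume $u^{i-1}\in\mathcal{V}$ lies on some SC trajectory $\tau$ already executed by the algorithm. At phase $i$, $\tau$ expands $u^{i-1}$ and produces the PMs of both extensions. If $i\in\mathcal{F}$, the unique valid extension is visited directly by $\tau$; if $i\in\mathcal{A}$, the hard-decision extension is visited by $\tau$, and any other extension $u^i\in\mathcal{V}$ satisfies $M(u^i)\leq M(\hat u^N_{\text{ML}})\leq M(\tilde u^N)$ throughout the run (the latter inequality holding at every moment because the current best leaf $\tilde u^N$ cannot improve below the ML PM), so $u^i$ is admitted to $\mathcal{L}$ and is visited in a later iteration before $\mathcal{L}$ empties, since the termination rule $\mathcal{L}=\varnothing$ processes every enrolled set.

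Because node-visits are tallied with multiplicity while distinct elements of $\mathcal{V}$ each contribute at least one visit, we obtain $\chi\geq|\mathcal{V}(\hat u^N_{\text{ML}},y^N)|$, which is \eqref{eq:complexity_realization}. Averaging both sides over $Y^N$ and dividing by $\chi_\text{SC}=N$ then yields \eqref{eq:complexity_average}.

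The main obstacle I anticipate is the borderline case $M(u^i)=M(\hat u^N_{\text{ML}})$, where the strict inequality used by SCOS to admit a candidate to $\mathcal{L}$ does not fire. For channels with continuous output these ties occur on a measure-zero event, and for discrete channels one may either sharpen the definition of $\mathcal{V}$ to a strict inequality or invoke a canonical tie-breaking convention; either way, the inequalities claimed in the lemma are preserved. A secondary subtlety is making precise the notion that the induction hypothesis provides an \emph{already executed} trajectory through $u^{i-1}$ at the moment its extension in $\mathcal{V}$ needs to be accounted for; this is justified by the fact that $\mathcal{L}$ is processed to exhaustion before SCOS terminates.
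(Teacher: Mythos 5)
Your proof is correct and takes essentially the same route as the paper, whose own proof simply asserts that the per-realization inequality ``follows from the definition \eqref{eq:set_def} and the description of \ac{SCOS} decoding'' and then averages and divides by $\chi_\text{SC}=N$; you have merely filled in the prefix-closure and induction details that the paper leaves implicit. Your remark on the tie case $M\left(u^i\right)=M\left(\hat{u}^N_{\text{ML}}\right)$, where the strict admission rule for $\mathcal{L}$ does not match the non-strict inequality defining $\mathcal{V}$, is a genuine subtlety the paper glosses over, and your resolution of it is adequate.
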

\begin{proof}
The inequality \eqref{eq:complexity_realization} follows from the definition \eqref{eq:set_def} and the description of \ac{SCOS} decoding in Section \ref{sec:sc_os}. Since \eqref{eq:complexity_realization} is valid for any $y^N$, the bound \eqref{eq:complexity_average} follows by $\chi_\text{SC}=N$.
\end{proof}
\begin{remark}
    Recall that the \ac{PM} \eqref{eq:score_scl} is calculated using the \ac{SC} decoding schedule, i.e., it ignores the frozen bits coming after the current decoding phase $i$. This means the size of the set \eqref{eq:set_def} tends to be smaller for codes more suited for \ac{SC} decoding, e.g., polar codes, while it gets larger for others, e.g., \ac{RM} codes. This principle is also observed when decoding via \ac{SCL} decoding, i.e., the required list size to get close to \ac{ML} performance gets larger when one ``interpolates'' from polar to \ac{RM} codes\cite{Mondelli14,RMpolar14,CNP20,CP21}. This motivates introducing dRM-polar codes in Section~\ref{sec:numerical} that provide a good performance vs. complexity trade-off under \ac{SCOS} decoding for moderate code lengths, e.g., $N=256$ bits.
\end{remark}
\begin{remark}
    \ac{SCOS} decoding can be extended by choosing a maximum complexity $\chi_\text{max}/N$. This modification is useful for low \acp{SNR}, but of course the decoder is no longer \ac{ML} in general. To also limit the space complexity, one can limit the list size, e.g., we use $\left|\mathcal{L}\right| \leq \log_2 N \times \chi_\text{max}/N$ for the simulations in the next section.
\end{remark}
\section{Numerical Results}
\label{sec:numerical}
This section provides simulation results for \ac{biAWGN} channels. We compute \acp{FER} and complexities for modified $\G_N$-coset codes under \ac{SCOS} decoding with a different maximum number of node visits. The \ac{RCU} and \ac{MC} bounds\cite{Polyanskiy10:BOUNDS} are plotted as benchmarks. The empirical \emph{ML lower bounds} of~\cite{tal2015list} are also plotted: for \ac{SCOS} decoding with the largest maximum complexity constraint, each time a decoding failure occurred the decision $\hat{u}^N$ was checked. If
\begin{align}
	M(\hat{u}^N) \leq M(u^N).
\end{align}
then even an \ac{ML} decoder would make an error.

Figure~\ref{fig:scos_dfpolar} shows the \ac{FER} and complexity vs. \ac{SNR} in $E_b/N_0$ for a $\left(128,64\right)$ \ac{PAC} code~\cite{arikan2019sequential} under \ac{SCOS} decoding with a different maximum number of node-visits. The complexity is normalized by the complexity of \ac{SC} decoding. The information set $\mathcal{A}$ is the same as that of the \ac{RM} code, and the polynomial of the convolutional code is given by $\bm{g}=\left(0,1,1,0,1,1\right)$. In other words, we use a modified \ac{RM} code with dynamic frozen bits with the following constraints:
\begin{align}
    u_i = u_{i-2}\oplus u_{i-3}\oplus u_{i-5}\oplus u_{i-6},~i\in\mathcal{F}~\text{and}~i>6.
\end{align}
Since the average complexity gets large for (near-)\ac{ML} decoding of RM codes with dynamic frozen bits, namely dRM codes\cite{CP21}, an ensemble of modified RM-polar codes\cite{RMpolar14} is introduced.

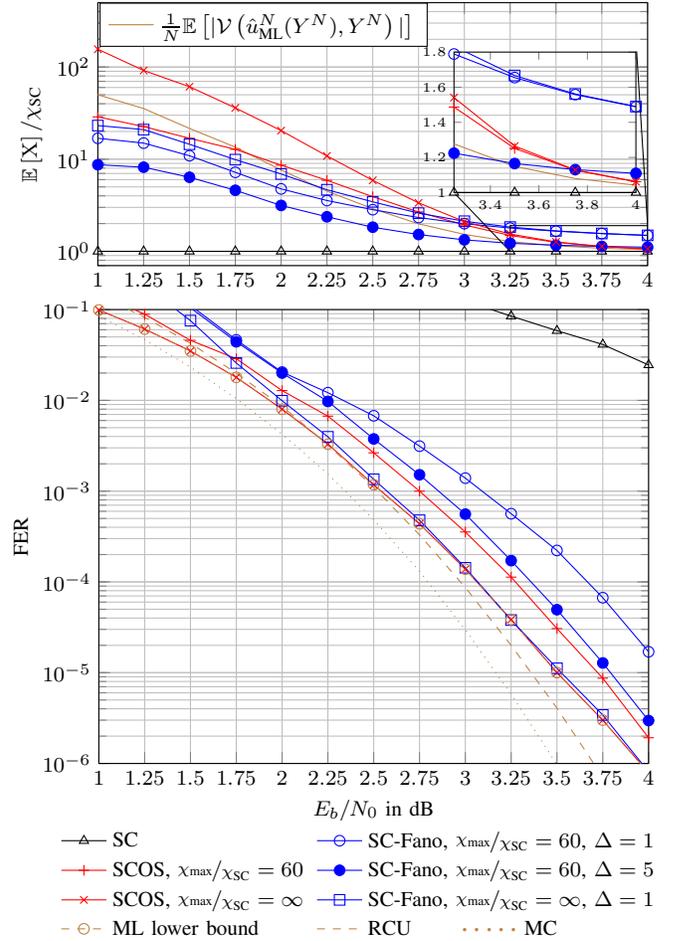
\begin{figure}
	\centering
	\footnotesize
	\begin{tikzpicture}[scale=1]
\footnotesize
\begin{semilogyaxis}[
mark options={solid},
width=3.5in,
height=2in,
legend style={at={(0.6,1)},anchor=north east},
ymin=0.7,
ymax=500,
grid=both,
xmin = 1,
xmax = 4,
ylabel = {$\mathbb{E}\left[\mathrm{X}\right] / \chi_\text{SC}$},
xtick=data
]

\addplot[brown]
table[x=snr,y=fer]{snr fer
	0 170.43
	0.25 132.12
	0.5 101.3
	0.75 76.784
	1 50.058
	1.25 35.447
	1.5 21.566
	1.75 13.517
	2 7.6471
	2.25 4.4946
	2.5 2.9054
	2.75 1.9874
	3 1.5279
	3.25 1.2769
	3.5 1.149
	3.75 1.079853
	4.0 1.041982
};\addlegendentry{$\frac{1}{N}\mathbb{E}\left[|\mathcal{V}\left(\hat{u}^N_\text{ML}(Y^N),Y^N\right)|\right]$}

\addplot[black, mark = triangle]
table[x=snr,y=fer]{snr fer
	0 1
	0.25 1
	0.5 1
	0.75 1
	1 1
	1.25 1
	1.5 1
	1.75 1
	2 1
	2.25 1
	2.5 1
	2.75 1
	3 1
	3.25 1
	3.5 1
	3.75 1
	4.0 1
};

\addplot[red, mark = +]
table[x=snr,y=fer]{snr fer
	0 51.474
	0.25 47.198
	0.5 42.789
	0.75 35.651
	1 28.807
	1.25 22.592
	1.5 16.862
	1.75 12.781
	2 8.6101
	2.25 5.8968
	2.5 3.9188
	2.75 2.6702
	3 1.9326
	3.25 1.487
	3.5 1.2515
	3.75 1.126594
	4 1.061820
};

\addplot[blue, mark = o]
table[x=snr,y=fer]{snr fer
  1.000000    16.856386 
   1.250000    14.949084
   1.500000    10.898122
   1.750000    7.200985 
   2.000000    4.759308 
   2.250000    3.559751 
   2.500000    2.830707 
   2.750000    2.326142 
   3.000000    1.997218 
   3.250000    1.789098 
   3.500000    1.654834 
   3.750000    1.557914 
   4.000000    1.488935
};

\addplot[blue, mark = *]
table[x=snr,y=fer]{snr fer
	0 2.771
	0.25 4.0333
	0.5 5.161
	0.75 7.3748
	1 8.7059
	1.25 8.1826
	1.5 6.3633
	1.75 4.5954
	2 3.1471
	2.25 2.3709
	2.5 1.8276
	2.75 1.5217
	3 1.3322
	3.25 1.2242
	3.5 1.1651
	3.75 1.130501
	4 1.109281
};

\addplot[blue, mark = square]
table[x=snr,y=fer]{snr fer
1.000000        23.15467
1.250000        20.95608
1.500000        14.51161
1.750000        9.887303
2.000000        6.906737
2.250000        4.659570
2.500000        3.446145
2.750000        2.596528
3.000000        2.121133
3.250000        1.833365
3.500000        1.666752
3.750000        1.561405
4.000000        1.489564
};

\addplot[red, mark = x]
table[x=snr,y=fer]{snr fer
	0 460.1
	0.25 372.04
	0.5 311.22
	0.75 211.98
	1 156.06
	1.25 91.848
	1.5 61.176
	1.75 36.022
	2 20.436
	2.25 10.851
	2.5 5.8956
	2.75 3.358
	3 2.103
	3.25 1.540764
	3.5 1.264679
	3.75 1.128713
	4 1.062140
};
\draw[black] (3.25,1) rectangle (4,1.9);
\draw (3.25,1) -- (2.95,4.2);
\draw (4,1.9) -- (3.94,148);
\coordinate (insetPosition) at (rel axis cs:1.0,0.19);
\end{semilogyaxis}
\begin{axis}[at={(insetPosition)},anchor={outer south east},tiny,xmin=3.25,xmax=4,ymin=1,ymax=1.8]

\addplot[brown]
table[x=snr,y=fer]{snr fer
	3.25 1.2769
	3.5 1.149
	3.75 1.079853
	4.0 1.041982
};

\addplot[black, mark = triangle]
table[x=snr,y=fer]{snr fer
	3.25 1
	3.5 1
	3.75 1
	4.0 1
};

\addplot[red, mark = +]
table[x=snr,y=fer]{snr fer
	3.25 1.487
	3.5 1.2515
	3.75 1.126594
	4 1.061820
};

\addplot[blue, mark = o]
table[x=snr,y=fer]{snr fer
   3.250000    1.789098 
   3.500000    1.654834 
   3.750000    1.557914 
   4.000000    1.488935
};

\addplot[blue, mark = *]
table[x=snr,y=fer]{snr fer
	3.25 1.2242
	3.5 1.1651
	3.75 1.130501
	4 1.109281
};

\addplot[blue, mark = square]
table[x=snr,y=fer]{snr fer
3.250000        1.833365
3.500000        1.666752
3.750000        1.561405
4.000000        1.489564
};

\addplot[red, mark = x]
table[x=snr,y=fer]{snr fer
	3.25 1.540764
	3.5 1.264679
	3.75 1.128713
	4 1.062140
};

\end{axis}
\end{tikzpicture}

\begin{tikzpicture}[scale=1]
\begin{semilogyaxis}[
legend style={at={(0.475,-0.125)},anchor=north,draw=none,/tikz/every even column/.append style={column sep=1mm},cells={align=left}},
legend columns=2,
legend cell align=left,
ymin=0.000001,
ymax=0.1,
width=3.5in,
height=3in,
grid=both,
xmin = 1,
xmax = 4,
xlabel = $E_b/N_0$ in dB,
ylabel = FER,
xtick={1,1.25,1.5,1.75,2,2.25,2.5,2.75,3,3.25,3.5,3.75,4}
]

\addplot[black, mark = triangle]
table[x=snr,y=fer]{snr fer
	0 0.88525
	0.25 0.85375
	0.5 0.7875
	0.75 0.74225
	1 0.6795
	1.25 0.60425
	1.5 0.528
	1.75 0.446
	2 0.361
	2.25 0.296
	2.5 0.23963
	2.75 0.17363
	3 0.1245
	3.25 0.08475
	3.5 0.05855
	3.75 0.041273
	4 0.024545
};\addlegendentry{SC}

\addplot[blue, mark = o]
table[x=snr,y=fer]{snr fer
       1.000000        0.351545 
        1.250000        0.227091
        1.500000        0.111000
        1.750000        0.046364
        2.000000        0.020500
        2.250000        0.012152
        2.500000        0.006764
        2.750000        0.003131
        3.000000        0.001391
        3.250000        0.000564
        3.500000        0.000222
        3.750000        0.000067
        4.000000        0.000017
};\addlegendentry{\ac{SC-Fano}, $\nicefrac{\chi_\text{max}}{\chi_\text{SC}}=60$, $\Delta=1$}

\addplot[red, mark = +]
table[x=snr,y=fer]{snr fer
	0 0.646
	0.25 0.5315
	0.5 0.401
	0.75 0.2595
	1 0.1485
	1.25 0.089
	1.5 0.045833
	1.75 0.02875
	2 0.012812
	2.25 0.0067
	2.5 0.0026447
	2.75 0.001
	3 0.00035461
	3.25 0.00011287
	3.5 3.0628e-05
	3.75 8.69e-6
	4 1.92e-06
};\addlegendentry{\ac{SCOS}, $\nicefrac{\chi_\text{max}}{\chi_\text{SC}}=60$}

\addplot[blue, mark = *]
table[x=snr,y=fer]{snr fer
	0 0.853
	0.25 0.7825
	0.5 0.679
	0.75 0.563
	1 0.398
	1.25 0.215
	1.5 0.1045
	1.75 0.044167
	2 0.02
	2.25 0.0097273
	2.5 0.0037593
	2.75 0.0015152
	3 0.00055833
	3.25 0.00017153
	3.5 4.9407e-05
	3.75 1.28e-05
	4 2.97e-06
};\addlegendentry{\ac{SC-Fano}, $\nicefrac{\chi_\text{max}}{\chi_\text{SC}}=60$, $\Delta=5$}

\addplot[red, mark = x]
table[x=snr,y=fer]{snr fer
	0 0.4405
	0.25 0.3175
	0.5 0.246
	0.75 0.1655
	1 0.099000
	1.25 0.061000
	1.5 0.035000
	1.75 0.017933
	2 0.008000
	2.25 0.003292
	2.5 0.001176
	2.75 0.00043668
	3 0.000138
	3.25 3.8432e-05
	3.5 1.0009e-05
	3.75 3.0e-06
	4 7.9e-07
};\addlegendentry{\ac{SCOS}, $\nicefrac{\chi_\text{max}}{\chi_\text{SC}}=\infty$}

\addplot[blue, mark = square]
table[x=snr,y=fer]{snr fer
1.000000        0.333273  
1.250000        0.185091  
1.500000        0.075545  
1.750000        0.025818  
2.000000        0.009909  
2.250000        0.003964  
2.500000        0.001351  
2.750000        0.000478  
3.000000        0.000143  
3.250000        0.000038  
3.500000        0.00001118
3.750000        0.00000342
4.000000        0.00000081
};\addlegendentry{\ac{SC-Fano}, $\nicefrac{\chi_\text{max}}{\chi_\text{SC}}=\infty$, $\Delta=1$}

\addplot[brown, dashed, mark = o, mark options={solid}]
table[x=snr,y=fer]{snr fer
	0 0.4405
	0.25 0.3175
	0.5 0.246
	0.75 0.1655
	1 0.099000
	1.25 0.061000
	1.5 0.035000
	1.75 0.017933
	2 0.008000
	2.25 0.003292
	2.5 0.001176
	2.75 0.00043668
	3 0.000138
	3.25 3.8432e-05
	3.5 1.0009e-05
	3.75 3.0e-06
	4 7.9e-07
};\addlegendentry{\ac{ML} lower bound}

\addplot[brown, dashed]
table[x=snr,y=fer]{snr fer
1.000000000000000   0.143839366949861
1.250000000000000   0.082213310092274
1.500000000000000   0.042497569653307
1.750000000000000   0.019950901922751
2.000000000000000   0.008586425015708
2.250000000000000   0.003223271460151
2.500000000000000   0.001100579891236
2.750000000000000   0.000330079099603
3.000000000000000   0.000086745193797
3.250000000000000   0.000019995279927
3.500000000000000   0.000004066498212
3.750000000000000   0.000000708069023
};\addlegendentry{RCU \qquad\textcolor{brown}{$\boldsymbol{\cdot\cdot\cdot\cdot\cdot}$} MC}

\addplot[brown, dotted]
table[x=snr,y=fer]{snr fer
1.000000000000000   0.084450067574090
1.250000000000000   0.046789395657259
1.500000000000000   0.023247677313516
1.750000000000000   0.010448670867371
2.000000000000000   0.004249278249449
2.250000000000000   0.001546310966604
2.500000000000000   0.000484410332805
2.750000000000000   0.000130900052036
3.000000000000000   0.000029657241148
3.250000000000000   0.000005796169358
3.500000000000000   0.000000972556776
};

\end{semilogyaxis}
\end{tikzpicture}
	\vspace*{-5mm}
	\caption{\ac{SCOS} decoding vs. \ac{SC-Fano} decoding for a $\left(128,64\right)$ \ac{PAC} code  with information set of \ac{RM}$(3,7)$ and polynomial $\bm{g}=\left(0,1,1,0,1,1\right)$.}
	\label{fig:scos_dfpolar}
\end{figure}

\begin{definition}
	The $(N,K)$ dRM-polar ensemble is the set of all codes, specified by the set $\mathcal{A}$ of an $(N,K)$ RM-polar code and choosing
	\begin{equation}
	    u_i = 0 \oplus \osum_{j\in\mathcal{A}^{(i-1)}} v_{j,i}u_j, \quad\forall i\in\mathcal{A}^c,
	\end{equation}
	with all possible $v_{j,i}\in\{0,1\}$ and $\mathcal{A}^{(0)}\triangleq\varnothing$, where $\osum$ denotes XOR summation.
\end{definition}

Figure~\ref{fig:scos_dfRMpolar} shows the simulation results for a rate $R\approx 0.6$ and length $N=256$ dRM-polar code chosen randomly from this ensemble.
\begin{figure}
	\centering
	\footnotesize
	\begin{tikzpicture}[scale=1]
\footnotesize
\begin{semilogyaxis}[
width=3.5in,
height=2in,
legend style={at={(1,1)},anchor=north east},
ymin=0.7,
ymax=800,
grid=both,
xmin = 1.5,
xmax = 3.5,
ylabel = {$\mathbb{E}|\left[\mathrm{X}\right] / \chi_\text{SC}$},
xtick={1.5,1.75,2.0,2.25,2.5,2.75,3.0,3.25,3.5},
]

\addplot[black, mark = triangle]
table[x=snr,y=fer]{snr fer
1.50   1
1.75   1
2.00   1
2.25   1
2.50   1
2.75   1
3.00   1
3.25   1
3.50   1
};

\addplot[blue, mark = o]
table[x=snr,y=fer]{snr fer
 1.50        83.795028   
 1.75        47.900659   
 2.00        21.651216   
 2.25        10.776071   
 2.50        5.446262    
 2.75        2.998036    
 3.00        2.114206    
 3.25        1.751856
 3.50        1.601123  
};

\addplot[blue, mark = square]
table[x=snr,y=fer]{snr fer
1.50        114.013499
1.75        72.945600 
2.00        34.447289 
2.25        16.217274 
2.50        6.972774  
2.75        3.402146  
3.00        2.192006  
3.25        1.763096
3.50        1.601308
};

\addplot[red, mark = +]
table[x=snr,y=fer]{snr fer
1.50   291.028451
1.75   169.699448
2.00   89.703434 
2.25   40.415621 
2.50   15.694348 
2.75   6.042126  
3.00   2.520955  
3.25   1.444602  
3.50   1.126563  
};

\addplot[red, mark = x]
table[x=snr,y=fer]{snr fer
1.50     764.952182 
1.75     405.065367 
2.00     181.191107 
2.25     71.323249   
2.50     24.959818  
2.75     7.772939  
3.00     2.839900  
3.25     1.479414 
3.50     1.130336
};

\addplot[blue, mark = *]
table[x=snr,y=fer]{snr fer
 1.50     47.521895 
 1.75      30.151412
 2.00      14.535406
 2.25      7.131651 
 2.50      3.323651 
 2.75      1.831700 
 3.00      1.337072 
 3.25      1.181829
 3.50      1.131066
};

\draw[black] (3.2,1) rectangle (3.5,1.9);
\draw (3.2,1) -- (2.768,15.5);
\draw (3.5,1.9) -- (3.43,660);
\coordinate (insetPosition) at (rel axis cs:1.0,0.35);
\end{semilogyaxis}
\begin{axis}[at={(insetPosition)},anchor={outer south east},tiny,xmin=3.2,xmax=3.5,ymin=1,ymax=1.9,xtick={3.2,3.3,3.4,3.5}]

\addplot[black, mark = triangle]
table[x=snr,y=fer]{snr fer
3.00   1
3.25   1
3.50   1
};

\addplot[blue, mark = o]
table[x=snr,y=fer]{snr fer
 3.00        2.114206    
 3.25        1.751856
 3.50        1.601123  
};

\addplot[blue, mark = square]
table[x=snr,y=fer]{snr fer
3.00        2.192006  
3.25        1.763096
3.50        1.601308
};

\addplot[red, mark = +]
table[x=snr,y=fer]{snr fer
3.00   2.520955  
3.25   1.444602  
3.50   1.126563  
};

\addplot[red, mark = x]
table[x=snr,y=fer]{snr fer
3.00     2.839900  
3.25     1.479414 
3.50     1.130336
};

\addplot[blue, mark = *]
table[x=snr,y=fer]{snr fer
  3.00      1.337072 
  3.25      1.181829
  3.50      1.131066
};

\end{axis}
\end{tikzpicture}

\begin{tikzpicture}[scale=1]
\begin{semilogyaxis}[
legend style={at={(0.45,-0.125)},anchor=north,draw=none,/tikz/every even column/.append style={column sep=1mm},cells={align=left}},
legend columns=2,
legend cell align=left,
ymin=0.000001,
ymax=0.1,
width=3.5in,
height=3in,
grid=both,
xmin = 1.5,
xmax = 3.5,
xlabel = $E_b/N_0\ \text{in dB}$,
ylabel = FER,
xtick={1.5,1.75,2.0,2.25,2.5,2.75,3.0,3.25,3.5},
]

\addplot[black, mark = triangle]
table[x=snr,y=fer]{snr fer
1.50        0.702273
1.75        0.590364
2.00        0.463727
2.25        0.344091
2.50        0.232727
2.75        0.155636
3.00        0.093091
3.25        0.045909
3.50        0.023455
3.75        0.011773
4.00        0.004659
};\addlegendentry{\ac{SC}}


\addplot[blue, mark = o]
table[x=snr,y=fer]{snr fer
1.50        0.106455  
1.75        0.040182  
2.00        0.012409  
2.25        0.004145  
2.50        0.001406  
2.75        0.000287  
3.00        0.000077  
3.25        0.00001331
3.50        0.00000209
};\addlegendentry{\ac{SC-Fano}, $\nicefrac{\chi_\text{max}}{\chi_\text{SC}}=10^3$, $\Delta=1$}

\addplot[red, mark = +]
table[x=snr,y=fer]{snr fer
1.50        0.066182
1.75        0.026818
2.00        0.010364
2.25        0.003394
2.50        0.000727
2.75        0.000178
3.00        0.000037
3.25        0.00000626
3.50        0.0000011
};\addlegendentry{\ac{SCOS}, $\nicefrac{\chi_\text{max}}{\chi_\text{SC}}=10^3$}

\addplot[blue, mark = *]
table[x=snr,y=fer]{snr fer
 1.50       0.137818 
 1.75        0.044273
 2.00        0.013061
 2.25        0.003935
 2.50        0.001098
 2.75        0.000224
 3.00        0.000048
 3.25        0.000008
 3.50        0.00000154
};\addlegendentry{\ac{SC-Fano}, $\nicefrac{\chi_\text{max}}{\chi_\text{SC}}=10^3$, $\Delta = 5$}

\addplot[red, mark = x]
table[x=snr,y=fer]{snr fer
1.50    0.055727  
1.75    0.022091  
2.00    0.007136  
2.25    0.001964  
2.50    0.000488  
2.75    0.000093  
3.00    0.000017  
3.25    0.00000295  
3.50    0.00000067
};\addlegendentry{\ac{SCOS}, $\nicefrac{\chi_\text{max}}{\chi_\text{SC}}=4\!\cdot\!10^3$}

\addplot[blue, mark = square]
table[x=snr,y=fer]{snr fer
1.50        0.105636
1.75        0.031455
2.00        0.008182
2.25        0.002443
2.50        0.000624
2.75        0.000122
3.00        0.0000252
3.25        0.00000451
3.50        0.00000078
};\addlegendentry{\ac{SC-Fano}, $\nicefrac{\chi_\text{max}}{\chi_\text{SC}}=4\!\cdot\!10^3$, $\Delta=1$}

\addplot[brown, dashed, mark = o, mark options={solid}]
table[x=snr,y=fer]{snr fer
1.50   0.051364
1.75   0.020273
2.00   0.006591
2.25   0.001655
2.50   0.000364
2.75   0.000063
3.00   0.00001338
3.25   0.00000248
3.50   0.00000061
};\addlegendentry{\ac{ML} lower bound}

\addplot[brown, dashed]
table[x=snr,y=fer]{snr fer
1.50   0.059757617047152
1.75   0.021787891578998
2.00   0.006164860529908
2.25   0.001447497327921
2.50   0.000267034704852
2.75   0.000038281156687
3.00   0.000004237053290
3.25   0.000000361836078
3.50   0.000000022242626
};\addlegendentry{RCU \qquad\textcolor{brown}{$\boldsymbol{\cdot\cdot\cdot\cdot\cdot}$} MC}

\addplot[brown, dotted]
table[x=snr,y=fer]{snr fer
1.50   0.038472876420095
1.75   0.013630064420632
2.00   0.003764763316368
2.25   0.000784435440668
2.50   0.000146821299603
2.75   0.000017424951529
3.00   0.000001717851313
3.25   0.000000114536883
3.50   0.000000006313519
};

\end{semilogyaxis}
\end{tikzpicture}
	\vspace*{-5mm}
	\caption{\ac{SCOS} decoding vs. \ac{SC-Fano} decoding for a $\left(256,154\right)$ dRM-polar code. The information set $\mathcal{A}$ is constructed as in \cite{RMpolar14} where the mother code is the RM$(4,8)$ and the polar rule is given by setting $\beta=2^{\nicefrac{1}{4}}$ in \cite{he2017beta}.}
	\label{fig:scos_dfRMpolar}
\end{figure}
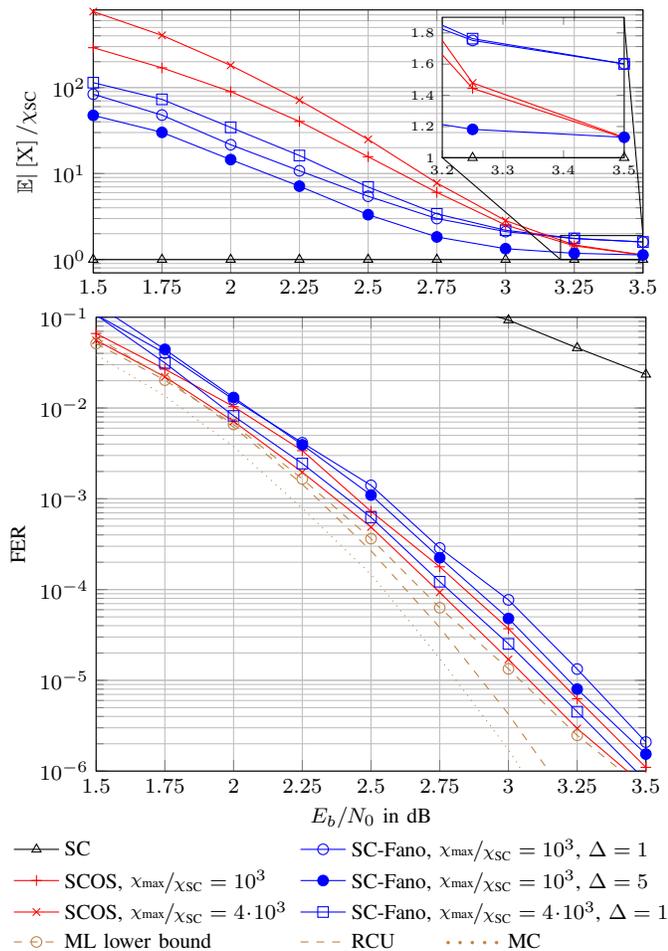
We observe the following behavior in Fig.~\ref{fig:scos_dfpolar} and Fig.~\ref{fig:scos_dfRMpolar}.
\begin{itemize}
	\item \ac{SCOS} decoding with unbounded complexity matches the \ac{ML} lower bound since it implements an \ac{ML} decoder.
	\item The average complexity $\mathbb{E}\left[\mathrm{X}\right]$ of a \ac{SCOS} decoder approaches the complexity of an \ac{SC} decoder for low \acp{FER} ($10^{-5}$ or below for the \ac{PAC} code and around $10^{-6}$ for the dRM-polar code). Indeed, it reaches the ultimate limit given by Lemma \ref{lem:comp}, which is not the case for \ac{SC-Fano} decoding. The difference to the \ac{RCU} bound\cite{Polyanskiy10:BOUNDS} is at most $0.2$ dB for the entire \ac{SNR} regime for the \ac{PAC} code and slightly larger for the dRM-polar code.
	\item The lower bound on the average given by \eqref{eq:complexity_average} is validated and is tight for high \ac{SNR}. However, the bound appears to be loose at low \ac{SNR} values mainly for two reasons: (i) usually the initial \ac{SC} decoding estimate $\tilde{u}^N$ is not the \ac{ML} decision and extra nodes in the difference set $\mathcal{V}\left(\tilde{u}^N,y^N\right)\setminus\mathcal{V}\left(\hat{u}^N_{\text{ML}},y^N\right)$ are visited and (ii) \ac{SCOS} decoding visits the same node multiple times and this cannot be tracked by a set definition. Considering (ii), it may be possible to reduce the number of revisits by improving the search schedule.
	\item A parameter $\Delta$ must be optimized carefully for \ac{SC-Fano} decoding to achieve \ac{ML} performance and this usually requires extensive simulations. Setting it small enough without any bound on the complexity would also practically achieve \ac{ML} performance; however, the complexity then explodes for longer codes. As seen from Figure~\ref{fig:scos_dfpolar}, $\Delta=1$ matches the \ac{ML} performance, but the average complexity is almost double that of \ac{SC} decoding near \acp{FER} of $10^{-5}$ or below. Moreover, under a maximum-complexity constraint, the average complexity of \ac{SC-Fano} decoding does not operate closer than \ac{SCOS} decoding to that of \ac{SC} decoding for similar performance.
	\item The parameter $\Delta$ must be optimized again for a good performance once a maximum-complexity constraint is imposed. Otherwise, the performance degrades significantly. Even so, \ac{SCOS} decoding outperforms \ac{SC-Fano} decoding for the same maximum-complexity constraint. However, \ac{SC-Fano} decoding has a lower average complexity for high \acp{FER} (if $\Delta$ is optimized) with a degradation in the performance. In contrast, \ac{SCOS} decoding does not require such an optimization.
\end{itemize}
\section{Conclusions}
\label{sec:conclusions}

A \ac{SCOS} decoding algorithm was proposed that implements \ac{ML} decoding. The complexity is adapted to the channel quality and approaches the complexity of \ac{SC} decoding for polar codes and short \ac{RM} codes at high \ac{SNR}. Unlike existing alternatives, the algorithm does not need an outer code or a separate parameter optimization. A lower bound on the complexity is approached for high \ac{SNR}. Finally, a code ensemble based on dRM-polar codes was introduced and a random instance performs within $0.25$ dB from the \ac{RCU} bound at a code length of $N=256$ bits with an average complexity close to that of \ac{SC} decoding.
	
\section*{Acknowledgements}
The authors thank Gerhard Kramer (TUM) for discussions which motivated the work and for improving the presentation. This work was supported by the German Research Foundation (DFG) under Grant KR~3517/9-1.


\begin{thebibliography}{10}
\providecommand{\url}[1]{#1}
\csname url@samestyle\endcsname
\providecommand{\newblock}{\relax}
\providecommand{\bibinfo}[2]{#2}
\providecommand{\BIBentrySTDinterwordspacing}{\spaceskip=0pt\relax}
\providecommand{\BIBentryALTinterwordstretchfactor}{4}
\providecommand{\BIBentryALTinterwordspacing}{\spaceskip=\fontdimen2\font plus
\BIBentryALTinterwordstretchfactor\fontdimen3\font minus
  \fontdimen4\font\relax}
\providecommand{\BIBforeignlanguage}[2]{{%
\expandafter\ifx\csname l@#1\endcsname\relax
\typeout{** WARNING: IEEEtran.bst: No hyphenation pattern has been}%
\typeout{** loaded for the language `#1'. Using the pattern for}%
\typeout{** the default language instead.}%
\else
\language=\csname l@#1\endcsname
\fi
#2}}
\providecommand{\BIBdecl}{\relax}
\BIBdecl

\bibitem{arikan2009channel}
E.~Ar{\i}kan, ``Channel polarization: A method for constructing
  capacity-achieving codes for symmetric binary-input memoryless channels,''
  \emph{{IEEE} Trans. Inf. Theory}, vol.~55, no.~7, pp. 3051--3073, Jul. 2009.

\bibitem{stolte2002rekursive}
N.~Stolte, ``Rekursive {C}odes mit der {P}lotkin-{K}onstruktion und ihre
  {D}ecodierung,'' Ph.D. dissertation, TU Darmstadt, 2002.

\bibitem{reed}
I.~Reed, ``A class of multiple-error-correcting codes and the decoding
  scheme,'' \emph{Trans. IRE Prof. Group on Inf. Theory}, vol.~4, no.~4, pp.
  38--49, Sep. 1954.

\bibitem{muller}
D.~E. Muller, ``Application of boolean algebra to switching circuit design and
  to error detection,'' \emph{Trans. IRE Prof. Group on Electronic Computers},
  vol. EC-3, no.~3, pp. 6--12, Sep. 1954.

\bibitem{kudekar17}
S.~Kudekar, S.~Kumar, M.~Mondelli, H.~D. Pfister, E.~\c{S}a\c{s}o\u{g}lu, and
  R.~L. Urbanke, ``{Reed-Muller} codes achieve capacity on erasure channels,''
  \emph{{IEEE} Trans. Inf. Theory}, vol.~63, no.~7, pp. 4298--4316, Jul. 2017.

\bibitem{Coskun18:Survey}
M.~C. Co\c{s}kun, G.~Durisi, T.~Jerkovits, G.~Liva, W.~Ryan, B.~Stein, and
  F.~Steiner, ``Efficient error-correcting codes in the short blocklength
  regime,'' \emph{Elsevier Phys. Commun.}, vol.~34, pp. 66--79, Jun. 2019.

\bibitem{IU21}
\BIBentryALTinterwordspacing
K.~Ivanov and R.~Urbanke, ``On the efficiency of polar-like decoding for
  symmetric codes,'' \emph{CoRR}, vol. abs/2104.06084, 2021. [Online].
  Available: \url{http://arxiv.org/abs/2104.06084}
\BIBentrySTDinterwordspacing

\bibitem{CP21}
\BIBentryALTinterwordspacing
M.~C. Co\c{s}kun and H.~D. Pfister, ``An information-theoretic perspective on
  successive cancellation list decoding and polar code design,'' \emph{IEEE
  Trans. Inf. Theory, submitted}, 2021. [Online]. Available:
  \url{http://arxiv.org/abs/2103.16680}
\BIBentrySTDinterwordspacing

\bibitem{Dumer04}
I.~{Dumer}, ``{Recursive decoding and its performance for low-rate Reed-Muller
  codes},'' \emph{{IEEE} Trans. Inf. Theory}, vol.~50, no.~5, pp. 811--823, May
  2004.

\bibitem{Dumer06}
------, ``{Soft-decision decoding of Reed-Muller codes: a simplified
  algorithm},'' \emph{{IEEE} Trans. Inf. Theory}, vol.~52, no.~3, pp. 954--963,
  Mar. 2006.

\bibitem{Dumer:List06}
I.~{Dumer} and K.~{Shabunov}, ``{Soft-decision decoding of Reed-Muller codes:
  recursive lists},'' \emph{{IEEE} Trans. Inf. Theory}, vol.~52, no.~3, pp.
  1260--1266, Mar. 2006.

\bibitem{NC12}
K.~Niu and K.~Chen, ``{CRC}-aided decoding of polar codes,'' \emph{{IEEE}
  Commun. Lett.}, vol.~16, no.~10, pp. 1668--1671, 2012.

\bibitem{miloslavskaya2014sequential}
V.~Miloslavskaya and P.~Trifonov, ``Sequential decoding of polar codes,''
  \emph{{IEEE} Commun. Lett.}, vol.~18, no.~7, pp. 1127--1130, 2014.

\bibitem{trifonov2018score}
P.~Trifonov, ``A score function for sequential decoding of polar codes,''
  \emph{IEEE Int. Symp. Inf. Theory (ISIT)}, pp. 1470--1474, 2018.

\bibitem{jeong2019sc}
M.-O. Jeong and S.-N. Hong, ``{SC}-{F}ano decoding of polar codes,'' \emph{IEEE
  Access}, vol.~7, pp. 81\,682--81\,690, 2019.

\bibitem{RMpolar14}
\BIBentryALTinterwordspacing
B.~Li, H.~Shen, and D.~Tse, ``A {RM}-polar codes,'' \emph{CoRR}, vol.
  abs/1407.5483, 2014. [Online]. Available:
  \url{http://arxiv.org/abs/1407.5483}
\BIBentrySTDinterwordspacing

\bibitem{Mondelli14}
M.~{Mondelli}, S.~H. {Hassani}, and R.~L. {Urbanke}, ``From polar to
  {Reed-Muller} codes: A technique to improve the finite-length performance,''
  \emph{{IEEE} Trans. Commun.}, vol.~62, no.~9, pp. 3084--3091, Sep. 2014.

\bibitem{afisiadis2014low}
O.~Afisiadis, A.~Balatsoukas-Stimming, and A.~Burg, ``A low-complexity improved
  successive cancellation decoder for polar codes,'' in \emph{Asilomar Conf.
  Signals, Syst., Comput.}, 2014, pp. 2116--2120.

\bibitem{chandesris2018dynamic}
L.~Chandesris, V.~Savin, and D.~Declercq, ``{D}ynamic-{SCF}lip decoding of
  polar codes,'' \emph{{IEEE} Trans. Commun.}, vol.~66, no.~6, pp. 2333--2345,
  2018.

\bibitem{tal2015list}
I.~Tal and A.~Vardy, ``List decoding of polar codes,'' \emph{{IEEE} Trans. Inf.
  Theory}, vol.~61, no.~5, pp. 2213--2226, 2015.

\bibitem{trifonov16}
P.~{Trifonov} and V.~{Miloslavskaya}, ``Polar subcodes,'' \emph{IEEE J. Sel.
  Areas Commun.}, vol.~34, no.~2, pp. 254--266, Feb. 2016.

\bibitem{PCC_Polar16}
T.~Wang, D.~Qu, and T.~Jiang, ``Parity-check-concatenated polar codes,''
  \emph{{IEEE} Commun. Lett.}, vol.~20, no.~12, pp. 2342--2345, 2016.

\bibitem{FabregasSiegel17}
M.~{Qin}, J.~{Guo}, A.~{Bhatia}, A.~{Guill\'{e}n i F\`{a}bregas}, and P.~H.
  {Siegel}, ``Polar code constructions based on {LLR} evolution,'' \emph{IEEE
  Commun. Lett.}, vol.~21, no.~6, pp. 1221--1224, Jun. 2017.

\bibitem{HDM18}
S.~A. Hashemi, N.~Doan, M.~Mondelli, and W.~J. Gross, ``Decoding {Reed-Muller}
  and polar codes by successive factor graph permutations,'' in \emph{IEEE Int.
  Symp. Turbo Codes \& Iterative Inf. Process.}, 2018, pp. 1--5.

\bibitem{Ye19}
M.~{Ye} and E.~{Abbe}, ``Recursive projection-aggregation decoding of
  {Reed-Muller} codes,'' in \emph{IEEE Int. Symp. Inf. Theory}, Jul. 2019, pp.
  2064--2068.

\bibitem{Ivanov19}
K.~{Ivanov} and R.~{Urbanke}, ``Permutation-based decoding of {Reed-Muller}
  codes in binary erasure channel,'' in \emph{IEEE Int. Symp. Inf. Theory},
  Jul. 2019, pp. 21--25.

\bibitem{Yuan19}
P.~{Yuan}, T.~{Prinz}, G.~{B\"ocherer}, O.~{\.{I}\c{s}can}, R.~{B\"ohnke}, and
  W.~{Xu}, ``Polar code construction for list decoding,'' in \emph{Proc. 11th
  Int. ITG Conf. on Syst., Commun. and Coding (SCC)}, Feb. 2019, pp. 125--130.

\bibitem{arikan2019sequential}
\BIBentryALTinterwordspacing
E.~Ar{\i}kan, ``From sequential decoding to channel polarization and back
  again,'' \emph{CoRR}, vol. abs/1908.09594, 2019. [Online]. Available:
  \url{http://arxiv.org/abs/1908.09594}
\BIBentrySTDinterwordspacing

\bibitem{CNP20}
M.~C. Co\c{s}kun, J.~{Neu}, and H.~D. {Pfister}, ``Successive cancellation
  inactivation decoding for modified {Reed-Muller} and {eBCH} codes,'' in
  \emph{IEEE Int. Symp. Inf. Theory}, 2020, pp. 437--442.

\bibitem{CP20}
M.~C. Co\c{s}kun and H.~D. {Pfister}, ``Bounds on the list size of successive
  cancellation list decoding,'' in \emph{Int. Conf. on Signal Process. and
  Commun. (SPCOM)}, 2020, pp. 1--5.

\bibitem{LGZ21}
\BIBentryALTinterwordspacing
B.~Li, J.~Gu, and H.~Zhang, ``Performance of {CRC} concatenated pre-transformed
  {RM}-polar codes,'' \emph{CoRR}, vol. abs/2104.07486, 2021. [Online].
  Available: \url{http://arxiv.org/abs/2104.07486}
\BIBentrySTDinterwordspacing

\bibitem{LZG19}
\BIBentryALTinterwordspacing
B.~Li, H.~Zhang, and J.~Gu, ``On pre-transformed polar codes,'' \emph{CoRR},
  vol. abs/1912.06359, 2019. [Online]. Available:
  \url{http://arxiv.org/abs/1912.06359}
\BIBentrySTDinterwordspacing

\bibitem{YFV20}
H.~Yao, A.~Fazeli, and A.~Vardy, ``List decoding of {Ar{\i}kan's PAC} codes,''
  in \emph{IEEE Int. Symp. Inf. Theory}, 2020, pp. 443--448.

\bibitem{RBV20}
M.~{Rowshan}, A.~{Burg}, and E.~{Viterbo}, ``Polarization-adjusted
  convolutional {(PAC)} codes: Sequential decoding vs list decoding,''
  \emph{IEEE Trans. Veh. Technol.}, vol.~70, no.~2, pp. 1434--1447, 2021.

\bibitem{LZL21}
\BIBentryALTinterwordspacing
Y.~Li, H.~Zhang, R.~Li, J.~Wang, G.~Yan, and Z.~Ma, ``On the weight spectrum of
  pre-transformed polar codes,'' \emph{CoRR}, vol. abs/2102.12625, 2021.
  [Online]. Available: \url{http://arxiv.org/abs/2102.12625}
\BIBentrySTDinterwordspacing

\bibitem{Rowshan19}
M.~{Rowshan} and E.~{Viterbo}, ``How to modify polar codes for list decoding,''
  in \emph{Proc. IEEE Int. Symp. Inf. Theory}, Jul. 2019, pp. 1772--1776.

\bibitem{KKK19}
M.~Kamenev, Y.~Kameneva, O.~Kurmaev, and A.~Maevskiy, ``Permutation decoding of
  polar codes,'' in \emph{XVI Int. Symp. "Problems of Redundancy in Information
  and Control Systems"}, 2019, pp. 1--6.

\bibitem{MV20}
V.~{Miloslavskaya} and B.~{Vucetic}, ``Design of short polar codes for {SCL}
  decoding,'' \emph{{IEEE} Trans. Commun.}, vol.~68, no.~11, pp. 6657--6668,
  2020.

\bibitem{GEE20}
\BIBentryALTinterwordspacing
M.~Geiselhart, A.~Elkelesh, M.~Ebada, S.~Cammerer, and S.~ten Brink,
  ``Automorphism ensemble decoding of {R}eed-{M}uller codes,'' \emph{CoRR},
  vol. abs/2012.07635, 2020. [Online]. Available:
  \url{http://arxiv.org/abs/2012.07635}
\BIBentrySTDinterwordspacing

\bibitem{GEE21}
\BIBentryALTinterwordspacing
------, ``On the automorphism group of polar codes,'' \emph{CoRR}, vol.
  abs/2101.09679, 2021. [Online]. Available:
  \url{http://arxiv.org/abs/2101.09679}
\BIBentrySTDinterwordspacing

\bibitem{TG21}
T.~Tonnellier and W.~J. Gross, ``On systematic polarization-adjusted
  convolutional {(PAC)} codes,'' \emph{IEEE Commun. Lett.}, pp. 1--1, 2021.

\bibitem{Polyanskiy10:BOUNDS}
Y.~Polyanskiy, V.~Poor, and S.~Verd{\`u}, ``Channel coding rate in the finite
  blocklength regime,'' \emph{{IEEE} Trans. Inf. Theory}, vol.~56, no.~5, pp.
  2307--235, May 2010.

\bibitem{fano1963heuristic}
R.~Fano, ``A heuristic discussion of probabilistic decoding,'' \emph{{IEEE}
  Trans. Inf. Theory}, vol.~9, no.~2, pp. 64--74, 1963.

\bibitem{D74}
B.~Dorsch, ``A decoding algorithm for binary block codes and {J}-ary output
  channels (corresp.),'' \emph{{IEEE} Trans. Inf. Theory}, vol.~20, no.~3, pp.
  391--394, 1974.

\bibitem{fossorier}
M.~Fossorier and S.~Lin, ``Soft-decision decoding of linear block codes based
  on ordered statistics,'' \emph{{IEEE} Trans. Commun.}, vol.~41, no.~5, pp.
  1379--1396, Sep. 1995.

\bibitem{wu2007soft}
Y.~Wu and C.~N. Hadjicostis, ``Soft-decision decoding using ordered recodings
  on the most reliable basis,'' \emph{{IEEE} Trans. Inf. Theory}, vol.~53,
  no.~2, pp. 829--836, 2007.

\bibitem{Mori:2009SIT}
R.~Mori and T.~Tanaka, ``Performance and construction of polar codes on
  symmetric binary-input memoryless channels,'' in \emph{Proc. IEEE Int. Symp.
  on Inf. Theory}, Seoul, Jun. 2009, pp. 1496--1500.

\bibitem{balatsoukas2015llr}
A.~Balatsoukas-Stimming, M.~B. Parizi, and A.~Burg, ``{LLR}-based successive
  cancellation list decoding of polar codes,'' \emph{{IEEE} Trans. Signal
  Process.}, vol.~63, no.~19, pp. 5165--5179, 2015.

\bibitem{he2017beta}
G.~{He}, J.~{Belfiore}, I.~{Land}, G.~{Yang}, X.~{Liu}, Y.~{Chen}, R.~{Li},
  J.~{Wang}, Y.~{Ge}, R.~{Zhang}, and W.~{Tong}, ``Beta-expansion: {A}
  theoretical framework for fast and recursive construction of polar codes,''
  \emph{IEEE Global Telecommun. Conf. (GLOBECOM)}, pp. 1--6, 2017.

\end{thebibliography}

\end{document}